\documentclass[conference]{IEEEtran}

\pdfoutput=1
\usepackage{graphicx}
\usepackage{amssymb}
\usepackage{epstopdf}
\usepackage{float}
\usepackage{dblfloatfix}
\usepackage{amsmath}
\usepackage{algorithm}
\usepackage[noend]{algpseudocode}
\usepackage[justification=centering]{caption}
\usepackage{amsthm}
\theoremstyle{definition}
\newtheorem{definition}{Definition}
\newtheorem{example}{Example}
\newtheorem{remark}{Remark}

\newtheorem{lemma}{Lemma}
\newtheorem{theorem}{Theorem}

\begin{document}
\title{Noisy Index Coding with Quadrature Amplitude Modulation (QAM)}
\author{Anjana~A.~Mahesh  and B~Sundar~Rajan,~\IEEEmembership{Fellow,~IEEE,}}
\maketitle
\begin{abstract}
This paper discusses noisy index coding problem over Gaussian broadcast channel. We propose a technique for mapping the index coded bits to M-QAM symbols such that the receivers whose side information satisfies certain conditions get coding gain, which we call the \textbf{QAM side information coding gain}. We compare this with the PSK side information coding gain, which was discussed in \cite{ICPSKM}.\footnote{The authors are with the Department of Electrical Communication Engineering, Indian Institute of Science, Bangalore-560012, India, Email: anjanaam@ece.iisc.ernet.in, bsrajan@ece.iisc.ernet.in}
\end{abstract}

\begin{IEEEkeywords}
Index coding, AWGN broadcast channel, M$-$QAM, QAM side information coding gain.
\end{IEEEkeywords}
\section{Introduction and Preliminaries} 
\label{sec1}
The problem of index coding over noiseless broadcast channels was introduced in \cite{ISCO} and has been well studied \cite{ICSI} - \cite{OMIC}. It involves a single source and a set of caching receivers. Each of the receivers wants a subset of the set of messages transmitted by the source and knows another non-intersecting subset of messages a priori as side information. The problem is to minimize the number of binary transmissions required to satisfy the demands of all the receivers, which amounts to minimizing the bandwidth required. 

An index coding problem $\lbrace \mathcal{X}, \mathcal{R} \rbrace$, involves a single source, $\mathcal{S}$ that wishes to send a set of $n$ messages $\mathcal{X}=\lbrace x_1, x_2, \ldots, x_n \rbrace$ to a set of $m$ receivers, $\mathcal{R} = \lbrace R_1, R_2, \ldots, R_m \rbrace$. The messages, $x_i$, $i\in \lbrace1,2, \ldots, n \rbrace$ take values from some finite field $\mathbb{F}$. A receiver $R_i$, $i\in \lbrace1,2, \ldots, m \rbrace$, is defined as $R_i = \lbrace \mathcal{W}_i, \mathcal{K}_i \rbrace$. $\mathcal{W}_i \subseteq \mathcal{X}$ is the set of messages demanded by  $R_i$ and $\mathcal{K}_i \subsetneq \mathcal{X}$ is the set of messages known to $R_i$, a priori, known as the side information that $R_i$ has.

An index code for the index coding problem with $\mathbb{F}=\mathbb{F}_2$ consists of
\begin{enumerate}
\item  an encoding map, $f:\mathbb{F}_{2}^n \rightarrow \mathbb{F}_{2}^l$, where $l$ is called the length of the index code, and
\item  a set of decoding functions $g_{1}, g_{2},\ldots,g_{m}$ such that, for a given input $\textbf{x} \in \mathbb{F}_{2}^n $, $g_{i}\left(f(\textbf{x}),\mathcal{X}_i\right) = \mathcal{W}_i, \ \forall i \in \lbrace 1, 2,\ldots,m \rbrace $.
\end{enumerate}
An optimal index code for binary transmissions  minimizes $l$, the number of binary transmissions required to satisfy the demands of all receivers.

A linear index code is one whose encoding function is linear and it is linearly decodable if all the decoding functions are linear. It was shown in \cite{ICSI} that for the class of index coding problems over $\mathbb{F}_{2}$ which can be represented using side information graphs, which were labeled later in \cite{OMIC} as single unicast index coding problems, the length of optimal linear index code is equal to the minrank over $\mathbb{F}_{2}$ of the corresponding side information graph. This was extended in \cite{ECIC} to  general index coding problems, over $\mathbb{F}_q$,  using minrank over $\mathbb{F}_q$ of their corresponding side information hypergraphs.

In this paper, we consider noisy index coding problems with $\mathbb{F}=\mathbb{F}_2$ over AWGN channels. In the noisy version of index coding, the messages are sent by the source over a noisy broadcast channel. Instead of binary transmissions if multilevel ($M$-ary, $M > 2$) modulation schemes are used further bandwidth reduction can be  achieved.  This has been introduced in \cite{IGBC} for Gaussian broadcast channels. It was also found in \cite{IGBC} that using $M$-ary modulation has the added advantage of giving coding gain to receivers with side information, termed the "side information gain". The idea of side information gain was characterized for the case where the source use $M$-PSK for transmitting the index coded bits in \cite{ICPSKM}, where  $M=2^l$, where $l$ is the length of the index code used, with the average energy of the $M$-QAM signal being equal to the total energy of $l$ binary transmissions. 

This paper discusses the case of noisy index coding over AWGN channels where the source uses $M$-QAM to transmit the index coded bits. As in \cite{ICPSKM}, here also, M = $2^l$, where $l$ is the length of the index code used. 

The contributions and organization in this paper may be summarized as follows:
\begin{enumerate}
\item An algorithm to map binary symbols to appropriate sized QAM constellation is presented which uses the well known Ungerboeck labelling as an ingredient. (Section \ref{sec3}
\item A necessary and sufficient condition for a receiver to get side information coding gain is presented. (Theorem \ref{thm1} in Section \ref{sec4})
\item It is shown that the difference in probability of error performance between the best and worst performing receivers increases monotonically as the length of the index code used increases. (Theorem \ref{thm2} in Section \ref{sec4})
\end{enumerate}
\noindent
In Section \ref{sec2} the notions of bandwidth gain and QAM side information coding gain are explained and simulation results are presented in Section \ref{sec5}. Concluding remarks constitute Section \ref{sec6} 

	
\section{Bandwidth Gain and QAM Side Information Coding Gain}
\label{sec2}
	
Consider a general index coding problem $\lbrace \mathcal{X}, \mathcal{R} \rbrace$ with $n$ messages, $\mathcal{X}=\lbrace x_1, x_2, \ldots, x_n \rbrace$ and $m$ receivers, $\mathcal{R} = \lbrace R_1, R_2, \ldots, R_m \rbrace$. Let the length of a linear index code (not necessarily optimal) for the index coding problem at hand be $l$.  We have $N \leq l \leq n$, where $N$ is the length of the optimal index code which is equal to the minrank over $\mathbb{F}_2$ of the corresponding side information hypergraph. Let the encoding matrix corresponding to the linear index code chosen be $L$, where $L$ is an $n \times l$ matrix over $\mathbb{F}_2$. The index coded bits are given by $\textbf{y} = \left[y_1 \  y_2 \dots y_l\right] = \textbf{x}L,$ where $\textbf{x} = \left[x_1 \  x_2 \dots x_n\right].$

The noiseless index coding involves $l$ binary transmissions. It was shown in \cite{ICPSKM} that for noisy index coding problems, if we transmit the index coded bits as a point from $2^l$-PSK signal set instead of $l$ binary transmissions, the receivers  satisfying certain conditions will get coding gain in addition to bandwidth gain whereas other receivers trade off coding gain for bandwidth gain. This gain which was termed as the "PSK side information coding gain" was obtained by proper mapping of index coded bits to PSK symbols an algorithm for which was presented. In this paper, we extend the results in \cite{ICPSKM} for the case where we use $2^l$-QAM to transmit the $l$ index coded bits.
	
\begin{definition}
The term \textbf{QAM bandwidth gain} is defined as the bandwidth gain obtained by each receiver by going from $l$ binary transmissions to a single $2^l$-QAM symbol transmission.		
\end{definition}

When we transmit a single $2^l-QAM$ signal point instead of transmitting $l$ binary transmissions we are going from an $l$- real dimensional or equivalently $l/2$ - complex dimensional signal set to 1 complex dimensional signal set. Hence all the receivers get a $l/2$ - fold QAM bandwidth gain. We state this simple fact as 
	
\begin{lemma}
\label{Lem:QAM BG}
Each receiver gets an $l/2$ - fold QAM bandwidth gain.
\end{lemma}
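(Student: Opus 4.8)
The plan is to prove Lemma~\ref{Lem:QAM BG} by a direct signal-space dimension count, comparing the number of complex dimensions consumed by the two transmission schemes; by the preceding definition of QAM bandwidth gain, the ratio of these two counts is precisely the quantity to be computed. First I would fix the dimensionality of the baseline scheme: each of the $l$ binary transmissions is an antipodal (BPSK) signal occupying a single \emph{real} signal-space dimension, so the $l$ binary transmissions together occupy $l$ real dimensions, i.e.\ $l/2$ complex dimensions.

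Next I would observe that a single point of the $2^l$-QAM constellation is carried jointly on the in-phase and quadrature components of one carrier over one symbol interval, and hence occupies exactly one complex dimension. The QAM bandwidth gain is then the ratio of the number of complex dimensions used before and after the change of scheme, namely $(l/2)/1 = l/2$. Since this count depends only on $l$ and is entirely independent of the side information $\mathcal{K}_i$ (and demand $\mathcal{W}_i$) of any receiver $R_i$, the same $l/2$-fold gain is obtained by every receiver, which is the assertion of the lemma.

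I expect no genuine obstacle here, as the statement is essentially a bookkeeping fact rather than a substantive result. The only point requiring care is to be consistent in counting real versus complex dimensions — one QAM symbol is two real (one complex) dimensions, whereas one BPSK symbol is one real (one-half complex) dimension — and to read the ``fold'' gain as a dimension-reduction \emph{ratio} rather than an integer count. This last reading matters because for odd $l$ the quantity $l/2$ is not an integer, even though the underlying real-dimension counts $l$ and $2$ always are, so the claim should be understood as a statement about the proportional reduction in occupied bandwidth.
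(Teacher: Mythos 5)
Your proof is correct and matches the paper's own argument, which likewise counts the $l$ binary transmissions as an $l$-real (i.e.\ $l/2$-complex) dimensional signal set and the single $2^l$-QAM point as one complex dimension, giving the $l/2$-fold gain for every receiver. Your added remarks on receiver-independence and on reading $l/2$ as a ratio for odd $l$ are sound refinements of the same dimension-counting bookkeeping, not a different route.
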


\begin{definition}
The term \textbf{QAM side information coding gain} (QAM-SICG) is defined as the coding gain a receiver with a non-empty side information set gets w.r.t a receiver with no side information while using $2^l$-QAM to transmit the index coded bits. 
\end{definition}

Let the set $S_{i}$, $i \in \lbrace1,2, \ldots, m \rbrace$, be defined as the set of all binary transmissions which a receiver $R_{i}$ knows a priori due to its available side information, i.e.,
$$S_{i}= \lbrace y_{j}|y_{j}=\sum\limits_{k \in J }x_{k} ,\ J\subseteq \mathcal{K}_{i}\rbrace .$$
Also, let $\eta_i$, $ i \in \lbrace1,2, \dots, m \rbrace $ be defined as follows.
   $$\eta_i \triangleq min \lbrace n - \left|\mathcal{K}_i\right|, \ l - \left|S_i\right| \rbrace.$$

\section{Algorithm}
\label{sec3}
In this section, we describe an algorithm to map the index coded bits to signal points of an appropriate sized QAM constellation so that the receivers satisfying the conditions of Theorem \ref{thm1} in the following section will get QAM-SICG. For the given index coding problem, choose an index code of length $l$. This fixes the value of $\eta_i, \ i \in \lbrace 1,2, \dots, m \rbrace$.
	
Order the receivers in the non-decreasing order of $\eta_{i}$.
WLOG, let $\lbrace R_{1},R_{2},.. ,R_{m} \rbrace$ be such that $\eta_{1} \leq \eta_2 \leq \ldots \leq \eta_m$.
	
Before starting to run the algorithm to map the index coded bits to $2^l$-QAM symbols, we need to

\begin{enumerate}
\item{Choose an appropriate $2^l$-QAM signal set.}
\item{Use Ungerboeck set partitioning \cite{TCM} to partition the $2^l$-QAM signal set chosen into subsets with increasing minimum subset distances}.
\end{enumerate}
	
To choose the appropriate QAM signal set, do the following: 
\begin{itemize}
\item \textbf{if} $l$ is even, then choose the $2^l$-square QAM with average symbol energy being equal to $l$.
\item \textbf{else}, take the $2^{l+1}$-square QAM with average symbol energy equal to $l$. Use Ungerboeck set partitioning \cite{TCM} to partition the $2^{l+1}$ QAM signal set into two $2^l$ signal sets. Choose any one of them as the $2^l$-QAM signal set.
\end{itemize}
	
Let $L_0,\ L_1, ..., L_{l-1}$ denote the different levels of partitions of the $2^l$-QAM with the minimum distance at layer $L_i=\Delta_i$, $i \in \lbrace0, 1,\ldots, l-1\rbrace$, being such that $\Delta_0 < \Delta_1 <  \ldots < \Delta_{l-1}$.
	
The algorithm to map the index coded bits to QAM symbols is given in \textbf{Algorithm 1}.

\begin{algorithm}
\caption{Algorithm to map index coded bits to QAM symbols}\label{algo1}
\begin{algorithmic}[1]
		
\If {$\eta_1 \geq N $}, do an arbitrary order mapping and \textbf{exit}.
\EndIf

			\State $i \gets 1$
			
			\If {all $2^N$ codewords have been mapped}, \textbf{exit}.
			\EndIf
			
			\State  Fix $( x_{i_1},x_{i_2}, \ldots, x_{i_{\left|\mathcal{K}_i\right|}} )=(a_{1},a_{2}, \ldots, a_{\left|\mathcal{K}_i\right|}) \in \mathcal{A}_i$ such that the set of codewords, $\mathcal{C}_i \subset \mathcal{C} $, obtained by running all possible combinations of $\lbrace x_{j}|\ j \notin \mathcal{K}_i\rbrace$ with $( x_{i_1},x_{i_2}, \ldots, x_{i_{\left|\mathcal{K}_i\right|}} )=(a_{1},a_{2}, \ldots, a_{\left|\mathcal{K}_i\right|})$ has maximum overlap with the codewords already mapped to PSK signal points.
			
			\If {all codewords in $\mathcal{C}_i$ have been mapped},
			\begin{itemize}
				\item $\mathcal{A}_i$=$\mathcal{A}_i \setminus \lbrace( x_{i_1},x_{i_2}, \ldots, x_{i_{\left|\mathcal{K}_i\right|}} )|( x_{i_1},x_{i_2}, \ldots, x_{i_{\left|\mathcal{K}_i\right|}} )$ together with all combinations of $\lbrace x_{j}|\ j \notin \mathcal{K}_i\rbrace$ will result in $\mathcal{C}_i\rbrace$.
				\item $i \gets i+1$
				\item \textbf{if} {$\eta_i \geq N$} \textbf{then},
				\begin{itemize}
					\item $i \gets 1$.
					\item goto \textbf{Step 3}
				\end{itemize} 
				\item \textbf{else}, goto \textbf{Step 3}
				
			\end{itemize}

			\Else
			\begin{itemize}
				\item Of the codewords in $\mathcal{C}_i$ which are yet to be mapped, pick any one and map it to a QAM signal point in that $2^{\eta_i}$ sized subset at level $L_{l-\eta_i}$ which has maximum number of signal points mapped by codewords in $\mathcal{C}_i$ without changing the already labeled signal points in that subset. 
				
				If all the signal points in such a subset have been already labeled, then map it to a signal point in another $2^{\eta_i}$ sized subset at the same level $L_{l-\eta_i}$  that this point together with the signal points corresponding to already mapped codewords in $\mathcal{C}_i$, has the largest minimum distance possible. Clearly this minimum distance, $d_{min}(R_i)$ is such that $\Delta_{l-\eta_i} \geq d_{min}(R_i) \geq \Delta_{l-(\eta_i+1)}$.
				\item $i \gets 1$
				\item goto \textbf{Step 3}
			\end{itemize}
			\EndIf

		\end{algorithmic}
	\end{algorithm}
	
		\begin{remark}
			Note that Algorithm \ref{algo1} above does not result in a unique mapping of index coded bits to $2^l$-QAM symbols. The mapping will change depending on the choice of $( x_{i_1},x_{i_2}, \ldots, x_{i_{\left|\mathcal{K}_i\right|}} )$ in each step. However, the performance of all the receivers obtained using any such mapping scheme resulting from the algorithm will be the same. 
		\end{remark}
		
		\begin{remark}
			If $\eta_{i}= \eta_j$ for some $i \neq j$, depending on the ordering of $\eta_i$ done before starting the algorithm, $R_i$ and $R_j$ may give different performances in terms of probability of error. $R_i$ and $R_j$ with $\eta_{i}= \eta_j$ will give the same performance if and only if $S_i \subseteq S_j$ or vice-versa.
		\end{remark}

\section{Main results}
\label{sec4}
In this section we present the main results apart from the algorithm given in the previous section. 
\begin{theorem}
\label{thm1}
A receiver $R_i, \ i \in \lbrace 1,2, \ldots, m \rbrace$ gets QAM side information coding gain,  with the scheme proposed, if and only if $\eta_i < l$, where $l$ is the length of the index code used.
\end{theorem}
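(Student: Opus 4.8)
The plan is to reduce QAM-SICG for a receiver to a statement about its minimum decision distance in the $2^l$-QAM signal set, and then to read off that distance from $\eta_i$ through the set-partitioning structure used by Algorithm \ref{algo1}. At high SNR the error probability of any receiver $R_i$ behaves like a decreasing function of the minimum Euclidean distance $d_{\min}(R_i)$ between the signal points it must still separate after exploiting its side information, so ``$R_i$ gets coding gain over a no-side-information receiver'' is equivalent to the strict inequality $d_{\min}(R_i) > d_{\min}^{\text{base}}$. I would first pin down the baseline: a receiver with empty side information has $|\mathcal{K}_i| = 0$ and $S_i = \emptyset$, hence $\eta_i = \min\{n, l\} = l$ (using $l \le n$); it can rule out no signal point and must distinguish all $2^l$ points, so $d_{\min}^{\text{base}} = \Delta_0$. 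Thus the whole theorem collapses to: $d_{\min}(R_i) > \Delta_0$ if and only if $\eta_i < l$.

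Next I would establish that the number of distinct QAM symbols $R_i$ can receive, conditioned on its side information, is $2^{\eta_i}$. Knowing $\mathcal{K}_i$ fixes the $|S_i|$ index-coded bits in $S_i$, confining the transmitted symbol to an affine subset in which $l - |S_i|$ bit positions remain free, i.e.\ to at most $2^{l-|S_i|}$ points; independently, only $2^{n-|\mathcal{K}_i|}$ message combinations are consistent with the known messages, producing at most $2^{n-|\mathcal{K}_i|}$ distinct symbols. Intersecting these two constraints gives at most $2^{\min\{n-|\mathcal{K}_i|,\, l-|S_i|\}} = 2^{\eta_i}$ candidate symbols, with equality generically, which is exactly the role played by the $\min$ in the definition of $\eta_i$.

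For sufficiency, assume $\eta_i < l$, i.e.\ $l - \eta_i \ge 1$. By construction, Algorithm \ref{algo1} assigns the $2^{\eta_i}$ candidate symbols of $R_i$ to a subset of size $2^{\eta_i}$ at partition level $L_{l-\eta_i}$, so that $d_{\min}(R_i) \ge \Delta_{l-\eta_i}$; since $l-\eta_i \ge 1$ and $\Delta_0 < \Delta_1 < \cdots < \Delta_{l-1}$, this yields $d_{\min}(R_i) \ge \Delta_1 > \Delta_0$ and hence strict coding gain. For necessity I would argue the contrapositive: because $l - |S_i| \le l$ we always have $\eta_i \le l$, so $\eta_i \ge l$ forces $\eta_i = l$, which requires $|S_i| = 0$ and $n - |\mathcal{K}_i| \ge l$. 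The side information then removes no signal point, the effective constellation is the full $2^l$-QAM, and $d_{\min}(R_i) = \Delta_0$, identical to the baseline, so $R_i$ gets no QAM-SICG.

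The main obstacle is the placement claim invoked in the sufficiency step, namely that the greedy, $\eta$-ordered assignment in Algorithm \ref{algo1} can simultaneously seat every receiver's $2^{\eta_i}$ symbols inside a single level-$L_{l-\eta_i}$ subset, so that $d_{\min}(R_i) \ge \Delta_{l-\eta_i}$ holds for all receivers at once rather than for one in isolation. The delicate point is contention: when a target subset is already populated by earlier-processed receivers, the else-branch of the algorithm may spill a symbol into another subset at the same level, and I would have to verify this never drags $d_{\min}(R_i)$ down to $\Delta_0$ for an $\eta_i < l$ receiver. The plan here is to exploit the non-decreasing $\eta$-ordering together with the maximum-overlap selection of $\mathcal{C}_i$: commitments made for small-$\eta$ (high-gain) receivers are nested consistently with the coarser subsets needed by large-$\eta$ receivers, so each $R_i$ retains a free level-$L_{l-\eta_i}$ subset (or at worst a placement with distance strictly exceeding $\Delta_0$). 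Closing this feasibility argument is what makes the per-receiver distance guarantee, and therefore the equivalence of QAM-SICG with $\eta_i < l$, hold uniformly.
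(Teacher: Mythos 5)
Your proposal takes essentially the same route as the paper's proof: both reduce QAM-SICG to the effective minimum distance of the $2^{\eta_i}$-point constellation seen by $R_i$ under the set-partition mapping, argue the ``if'' direction by placing those points in a level-$L_{l-\eta_i}$ subset so that $d_{min}(R_i) > \Delta_0$ when $\eta_i < l$, and argue the ``only if'' direction by noting that $\eta_i \geq l$ forces $\eta_i = l$ (since $\eta_i \leq l$ always), leaving the receiver with the full $2^l$-QAM and the baseline distance $\Delta_0$. If anything you are more careful than the paper, which settles sufficiency with the single sentence ``by appropriate mapping \ldots we can increase $d_{min}(R_i)$''; the contention issue you flag in your closing paragraph --- that the else-branch of Algorithm \ref{algo1} only guarantees $d_{min}(R_i) \geq \Delta_{l-(\eta_i+1)}$, which degenerates to $\Delta_0$ when $\eta_i = l-1$ --- is a real subtlety that the paper's own proof passes over silently.
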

\begin{proof} Consider a receiver $R_i = \lbrace \mathcal{W}_i, \mathcal{K}_i \rbrace$.      
 Let $\mathcal{K}_i = \lbrace i_{1}, i_2, \ldots, i_{\left|\mathcal{K}_i\right|} \rbrace$ and $\mathcal{A}_i$ $\triangleq$ $\mathbb{F}_2^{\left|\mathcal{K}_i\right|}$, $i=1,2,\ldots,m$. For any given realization of $( x_{i_1},x_{i_2}, \ldots, x_{i_{\left|\mathcal{K}_i\right|}} )$, the effective signal set seen by the receiver $R_i$ consists of $2^{\eta_i}$ points. Let $d_{min}(R_i) \triangleq$ the minimum distance of the signal set seen by the receiver $R_i$, $i = 1,2,\ldots,m.$

\noindent
{\it Proof of the `if part':} If $\eta_i < l$, then the effective signal set seen by the receiver $R_i$ will have $2^{\eta_i} < 2^l$ points. Hence by appropriate mapping of index coded bits to QAM symbols, we can increase $d_{min}(R_i)$. Thus $R_i$ will get coding gain over a receiver that has no side information because the minimum distance seen by a receiver with no side information will be the minimum distance of $2^l$-QAM signal set. 
        
\noindent
{\it Proof of the `only if part':} Let us a consider a receiver $R_i$ such that $\eta_{i} \geq l $. Then $d_{min}(R_i)$, will not increase. $d_{min}(R_i)$ will remain equal to the minimum distance of the  corresponding $2^{l}$ - QAM, same as that of a receiver with no side information. Thus a receiver $R_i$ with $\eta_{i} \geq l $ will not get QAM-SICG.
\end{proof}
\begin{remark}
It is to be noted that the value of $\eta_i$ not only depends on $\left|\mathcal{K}_i\right|$ but also on $\left|S_i\right|$, which, in turn, depends on the index code chosen. Hence for the same index coding problem, a particular receiver may satisfy Theorem \ref{thm1} and get QAM-SICG for some index codes and may not get QAM-SICG for other index codes.
\end{remark}

\begin{theorem}
\label{thm2}
The difference in probability of error performance between the best performing receiver and the worst performing receiver for a given index coding problem, while using $2^l$- QAM signal point to transmit the index coded bits, will increase monotonically while $l$ increases from $N$ to $n$ if the following conditions are satisfied.\\
(1) The best performing receiver gets QAM-SICG.\\
(2) The worst performing receiver has no side information.
\begin{proof}
If there is a receiver with no side information, say $R$, whatever the length, $l$, of the index code used is, the effective signal set seen by $R$ will be $2^l$-QAM. Therefore the minimum distance seen by $R$ will be the minimum distance of $2^l$-QAM signal set. For $2^l$-QAM with average symbol energy equal to $l$, the squared minimum pair-wise distance of $2^l$-QAM, $d_{min}(2^l$-QAM$)$, obtained by the proposed mapping scheme in Algorithm \ref{algo1} is given by 
                        
\begin{equation*}
d_{min}(2^l-\text{QAM})=\left\{
\begin{array}{@{}ll@{}}
2\sqrt{\dfrac{1.5l}{(2^{l}-1)}}, & \text{if}\ l \text{ is even} \\
2\sqrt{2}\sqrt{\dfrac{1.5l}{(2^{l+1}-1)}}, & \text{otherwise}
\end{array}\right.
\end{equation*}                 

\noindent                
which is monotonically decreasing in $l$. Therefore the performance of the receiver with no side information deteriorates as the length of the index code increases from $N$ to $n$. 
\end{proof}
\end{theorem}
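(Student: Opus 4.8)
The plan is to reduce the comparison of probability-of-error performance to a comparison of minimum distances, and then to track how the minimum distance of the best and of the worst receiver each move as $l$ grows. At the operating (moderate-to-high) SNR, the symbol error probability of any receiver is governed by the nearest-neighbour approximation $P_e \approx N_{min}\, Q\!\left(d_{min}/(2\sigma)\right)$, so that, since $Q(\cdot)$ is strictly decreasing, $P_e$ is an increasing function of $l$ exactly when $d_{min}$ decreases in $l$, and a decreasing function of $l$ exactly when $d_{min}$ increases. The partial proof already supplies one half: the worst receiver has no side information, sees the full $2^l$-QAM, and hence has $d_{min}=d_{min}(2^l\text{-QAM})$, which is monotonically decreasing; consequently $P_e^{\text{worst}}$ is monotonically increasing in $l$. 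It therefore remains to show that $P_e^{\text{best}}$ is monotonically decreasing, i.e. that $d_{min}(R_{\text{best}})$ is monotonically increasing in $l$.

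For the best receiver I would exploit the Ungerboeck structure. Since set partitioning doubles the minimum squared subset distance at each level, $\Delta_j^2 = 2^j\,\Delta_0^2$ with $\Delta_0 = d_{min}(2^l\text{-QAM})$. By the proof of Theorem~\ref{thm1} and Algorithm~\ref{algo1}, a receiver achieving QAM-SICG with parameter $\eta_{\text{best}}$ sees an effective signal set lying in a subset at level $L_{l-\eta_{\text{best}}}$, so $d_{min}^2(R_{\text{best}}) = \Delta_{l-\eta_{\text{best}}}^2 = 2^{\,l-\eta_{\text{best}}}\,d_{min}^2(2^l\text{-QAM})$. Substituting the explicit formula for $d_{min}^2(2^l\text{-QAM})$ from the statement and simplifying yields, in each parity class, an expression equal to a constant $2^{-\eta_{\text{best}}}$ times $6l/(1-2^{-l})$ (for $l$ even) or $12l/(2-2^{-l})$ (for $l$ odd). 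A direct ratio test, $f(l+1)>f(l)$, applied to each of these functions shows it is strictly increasing, and a short check across the even/odd switch confirms the combined sequence is increasing. Hence $d_{min}(R_{\text{best}})$ increases in $l$ and $P_e^{\text{best}}$ decreases.

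Combining the two halves, the difference $P_e^{\text{worst}}-P_e^{\text{best}}$ is the sum of a function that increases in $l$ and the negative of a function that decreases in $l$, hence is itself strictly increasing; this is exactly the asserted monotone widening of the performance gap. I would also remark that the same conclusion can be phrased in terms of the SNR gap, which equals $10\log_{10}(2^{\,l-\eta_{\text{best}}})$ dB and grows essentially linearly in $l$ once the side information has saturated.

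The step I expect to be the main obstacle is controlling the $l$-dependence of $\eta_{\text{best}}$: as $l$ increases the code changes and $\eta_{\text{best}}=\min\{n-|\mathcal{K}_{\text{best}}|,\,l-|S_{\text{best}}|\}$ can itself vary, so I must argue that it is bounded by the constant $n-|\mathcal{K}_{\text{best}}|$ and eventually equals it, and that the minimum-distance sequence stays monotone through the transient. A second, more technical point will be justifying that the strict ordering of minimum distances transfers to a monotone ordering of the actual error probabilities, which requires that the nearest-neighbour multiplicities $N_{min}$ (which also depend on $l$) do not overturn the dominant $d_{min}$ behaviour — cleanly true in the high-SNR regime where the $d_{min}$ term dominates.
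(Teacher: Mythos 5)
Your proposal is correct where it overlaps the paper, and it in fact attempts strictly more than the paper's own proof does. The published proof of Theorem~\ref{thm2} is one-sided: it establishes only your first half, namely that a receiver with no side information sees the full $2^l$-QAM, that $d_{min}^2(2^l\text{-QAM})$ equals $6l/(2^l-1)$ for even $l$ and $12l/(2^{l+1}-1)$ for odd $l$, and that this is monotonically decreasing in $l$; it then stops, never arguing that the best receiver's performance improves, and it passes from minimum distance to error probability exactly as implicitly as your nearest-neighbour step does, so you lose nothing there. Your second half --- using the Ungerboeck doubling $\Delta_j^2 = 2^j\Delta_0^2$ to write $d_{min}^2(R_{\text{best}}) = 2^{l-\eta}\,d_{min}^2(2^l\text{-QAM}) = 2^{-\eta}\cdot 6l/(1-2^{-l})$ for even $l$ (and $2^{-\eta}\cdot 12l/(2-2^{-l})$ for odd $l$), followed by a ratio test and a parity cross-check --- is sound and is corroborated by the paper's own data: in Table~\ref{Table_ex_N_n}, receiver $R_1$ has $\eta_1 = 1$ and sees $9.6,\ 12.8,\ 15.24$ at $l = 3, 4, 5$, exactly $2^{l-1}$ times the no-side-information values $2.4,\ 1.6,\ 0.952$. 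What your route buys is an actual proof of the ``widening gap'' assertion, which the paper merely asserts after proving deterioration of the worst receiver.

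Two caveats, both of which you flag but do not close. First, Algorithm~\ref{algo1} only guarantees the sandwich bound $\Delta_{l-\eta_i} \geq d_{min}(R_i) \geq \Delta_{l-(\eta_i+1)}$ in its else-branch; the exact equality $d_{min}(R_{\text{best}}) = \Delta_{l-\eta}$ needs the observation that the best receiver, having the smallest $\eta$, is mapped first and its cosets each occupy a fresh level-$L_{l-\eta}$ subset --- one sentence, but it should be said. Second, and more substantively, $\eta_{\text{best}} = \min\{\,n-|\mathcal{K}|,\ l-|S|\,\}$ is re-evaluated for each code length: if at some step $\eta_{\text{best}}$ increases by one (possible when $l-|S| < n-|\mathcal{K}|$ and $|S|$ does not grow with $l$), the factor $2^{l-\eta}$ stalls while $d_{min}^2(2^l\text{-QAM})$ drops, and $d_{min}(R_{\text{best}})$ can decrease at that step. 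So your argument needs the hypothesis that $\eta_{\text{best}}$ is non-increasing along the chosen family of index codes (as happens in the paper's Example~\ref{ex_N_n}, where $\eta_1 = 1$ throughout). This is a genuine hypothesis missing from the theorem as stated, not a defect of your reduction; the paper avoids confronting it only by never proving this half at all.
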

\begin{remark}
Although condition (1) in Theorem \ref{thm2} is necessary, the same cannot be said about condition (2). Even when condition (2) above is not satisfied, i.e., the worst performing receiver has at least 1 bit of side information but not the same amount of side information as the best performing receiver, the difference between their performances can still increase monotonically as we move from $N$ to $n$. This is because the error performance is determined by the effective minimum distances seen by the receivers, which, in turn, depend on the mapping used between index coded bits and QAM symbols.
\end{remark}
\section{Simulation results}
\label{sec5}
A general index coding problem can be converted into one where each receiver demands only one message since a receiver $R_i = \lbrace \mathcal{W}_i, \mathcal{K}_i \rbrace$ can be converted into $\left|\mathcal{W}_i\right|$ receivers all with the same side information $\mathcal{K}_i$ and each demanding a single message. So it is enough to consider index coding problems where the receivers demand a single message each and hence both the examples considered in this section are such problems. Even though the examples considered are what are called single unicast index coding problems in \cite{OMIC}, the results hold for any general index coding problem.

\subsection{QAM-SICG and QAM Vs PSK}
\label{sec:Comparison}
In this subsection, we give an example with simulation results to support our claims in Section \ref{sec4}. The mapping of index coded bits to QAM symbols is done using our Algorithm \ref{algo1}. The receivers which satisfy Theorem \ref{thm1} are shown to get QAM-SICG. We also compare the performance of different receivers while using QAM and PSK to transmit index coded bits. For a given index coding problem and a chosen index code, the mapping of index coded bits to QAM symbols is done using the algorithm described in Section \ref{sec3}, whereas the mapping to PSK symbols is done using the Algorithm 1 in \cite{ICPSKM}. We also give the effective minimum distances which are seen by different receivers which explains the difference in their error performance.
\begin{example}
\label{ex_PSK_QAM} 
Let $m=n=$7. $\mathcal{W}_i = x_{i},\forall i\in \lbrace 1, 2,\ldots,7 \rbrace $. 
$\mathcal{K}_1 =\left\{2,3,4,5,6,7\right\},\ \mathcal{K}_2=\left\{1,3,4,5,7\right\},\ \mathcal{K}_3=\left\{1,4,6,7\right\},\mathcal{K}_4=\left\{2,5,6\right\},\mathcal{K}_5=\left\{1,2\right\},\ \mathcal{K}_6=\left\{3\right\},\mathcal{K}_7=\phi$.\\
The minrank over $\mathbb{F}_{2}$ of the side information graph corresponding to the above problem evaluates to $N$=4. 
An optimal linear index code is given by the encoding matrix,\\
\begin{center}
$L =\left[\begin{array}{cccc}
1 & 0 & 0 & 0\\
1 & 0 & 0 & 0\\
0 & 1 & 0 & 0\\
0 & 0 & 1 & 0\\
1 & 0 & 0 & 0\\
0 & 1 & 0 & 0\\
0 & 0 & 0 & 1
\end{array}\right]$. 
\end{center}
The index coded bits are, $y_{1}=x_{1}+x_{2}+x_{5}$; $y_{2}=x_{3}+x_{6}$; $y_{3}=x_{4}$;  $y_{4}=x_{7}$.
\end{example}
\noindent
The 16-QAM mapping for the above example is given in Fig. \ref{Map_ex-PSK_QAM}. The simulation result which compares the performance of different receivers when they use 16-QAM and 16-PSK for transmission of index coded bits is shown in Fig. \ref{sim_ex-PSK_QAM}. The probability of error plot corresponding to 4-fold binary transmission is also shown in Fig. \ref{sim_ex-PSK_QAM}. The reason for the difference in performance while using QAM and PSK can be explained using the minimum distance seen by the different receivers for the 2 cases. This is summarized in TABLE \ref{Table-ex_PSK_QAM}.
\begin{figure}
\includegraphics[scale= 0.6]{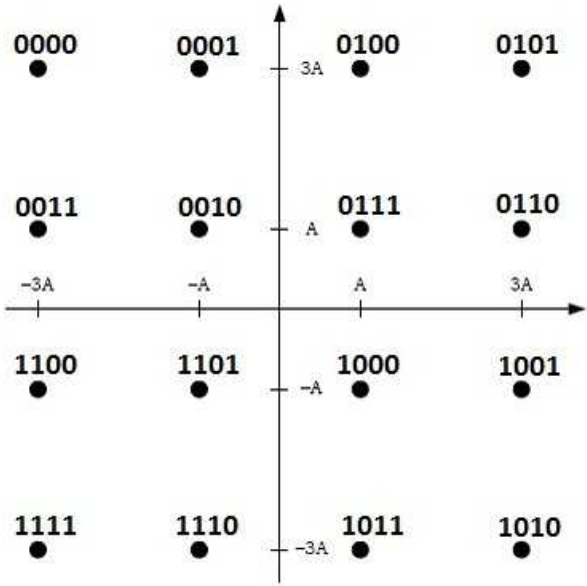}
\caption{16-QAM mapping for Example \ref{ex_PSK_QAM}}
\label{Map_ex-PSK_QAM}	
\end{figure}


\begin{table}[h]
		\renewcommand{\arraystretch}{2}
		\begin{center}
			
			\begin{tabular}{|c|c|c|c|c|c|c|c|}
				\hline
				Parameter & $R_{1}$ & $R_{2}$ & $R_{3}$ & $R_{4}$ & $R_{5}$ & $R_{6}$ & $R_{7}$ \\
				\hline 
				
			$d_{min}^2 - 16-QAM$ & 12.8 & 6.4 & 6.4 & 1.6 & 1.6 & 1.6 & 1.6 \\ 
				
			$d_{min}^2 - 16-PSK$ & 16 & 8 & 8 & 0.61 & 0.61 & 0.61 & 0.61  \\ 
					
			$d_{min}^2 - binary$ & 4 & 4 & 4 & 4 & 4 & 4 & 4 \\ 				
			\hline
				
			\end{tabular}
			
			\caption \small { Table showing  minimum distance seen by different receivers while using 16-QAM and 16-PSK in Example \ref{ex_PSK_QAM}.}
			
			\label{Table-ex_PSK_QAM}	
			
		\end{center}
\end{table}
	
\subsection {Performance for different QAM sizes-$2^N$ to $2^n$}
\label{sec:N_to_n}

In this subsection, we give an example to support our main results in Section \ref{sec3} that the difference in probability of error performance between the best performing receiver and the worst performing receiver widens as the length of the index code increases from $N$ to $n$. 

\begin{example}
\label{ex_N_n}
Let $m=n=5.\ \mathcal{W}_i = \lbrace x_i \rbrace, \ \forall i \in \lbrace 1, 2, 3, 4, 5 \rbrace.\ \mathcal{K}_1=\lbrace2,3,4,5\rbrace, \ \mathcal{K}_2=\lbrace1,3,5\rbrace, \ \mathcal{K}_3=\lbrace1,4\rbrace, \ \mathcal{K}_4= \lbrace2\rbrace, \ \mathcal{K}_5 = \phi$.
\noindent	
For this problem, minrank, $N$ = 3. An optimal linear index code is given by $L_1$ with the index coded bits being
$ y_{1}=x_{1}+x_{2}+x_{3};~~ y_{2}=x_{2}+x_{4}; ~~ y_{3}=x_{5}. 
$
\begin{align*}
L_1 = \left[\begin{array}{ccc}
1 & 0 & 0\\
1 & 1 & 0\\
1 & 0 & 0\\
0 & 1 & 0\\
0 & 0 & 1
\end{array}\right],\ L_2 = \left[\begin{array}{cccc}
1 & 0 & 0 & 0\\
1 & 0 & 0 & 0\\
0 & 1 & 0 & 0\\
0 & 0 & 1 & 0\\
0 & 0 & 0 & 1
\end{array}\right]
\end{align*}

%
%
Now, consider an index code of length $N+1=4$. The corresponding encoding matrix is $L_2$ 
and the index coded bits are
$ y_{1}=x_{1}+x_{2};~~ y_{2}=x_{3}; y_{3}=x_{4}; y_{4}=x_{5}.
$
We compare these with the case where we send the messages as they are, i.e., $L_3= I_5$, where $I_5$ denotes the $5 \times 5$ identity matrix.

The QAM mappings which give performance advantage to receivers satisfying conditions (1) and (2) of Theorem \ref{thm2} given in Section \ref{sec3} for the three different cases considered are given in Fig. \ref{Map_ex-N_n}(a) and (b) and \ref{Map_ex_n} respectively.
\begin{figure*}
	\includegraphics[scale= 0.7]{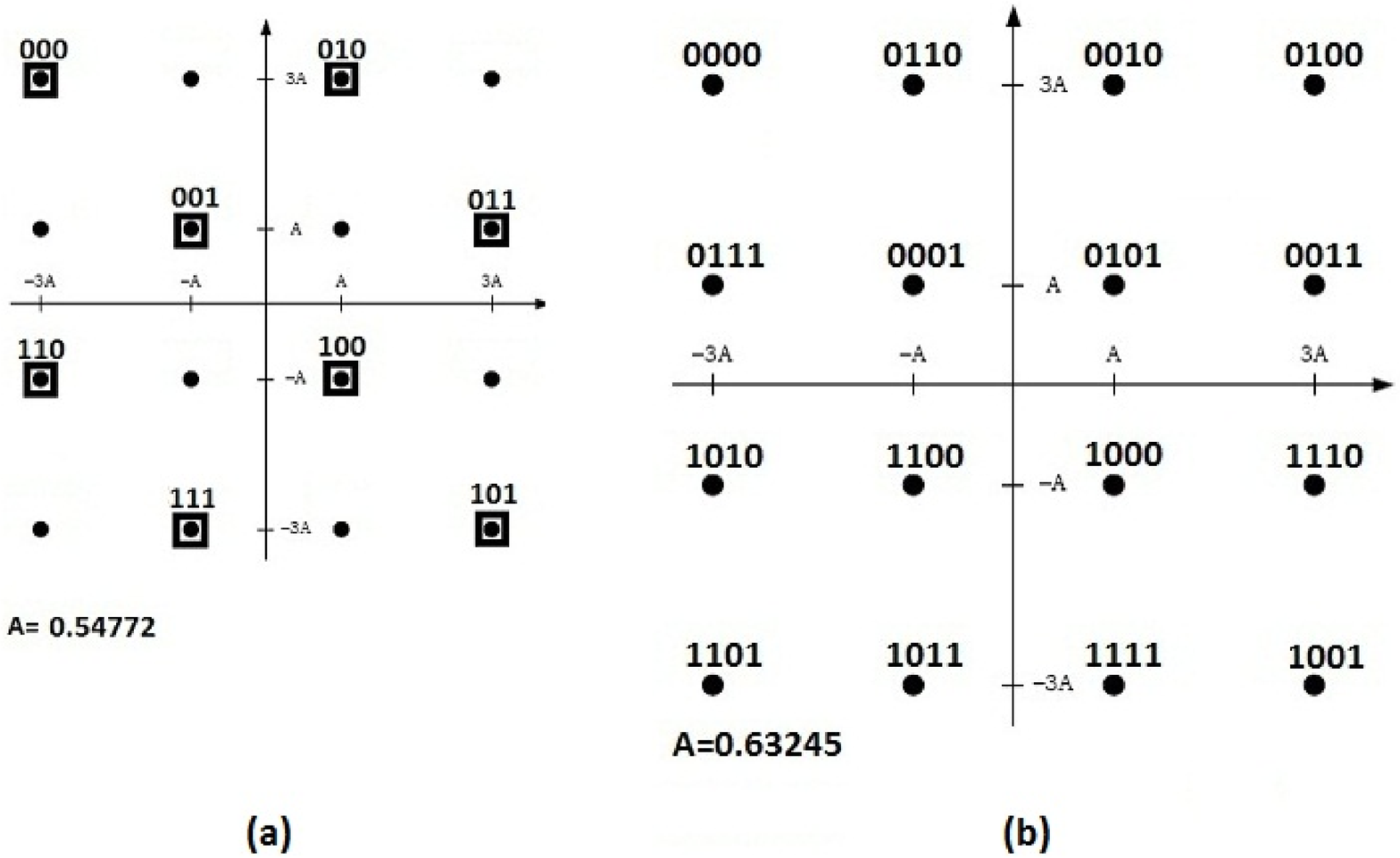}
	\caption{8-QAM and 16-QAM mapping for Example \ref{ex_N_n}}
	\label{Map_ex-N_n}	
\end{figure*}

\begin{figure}
	\includegraphics[scale= 0.5]{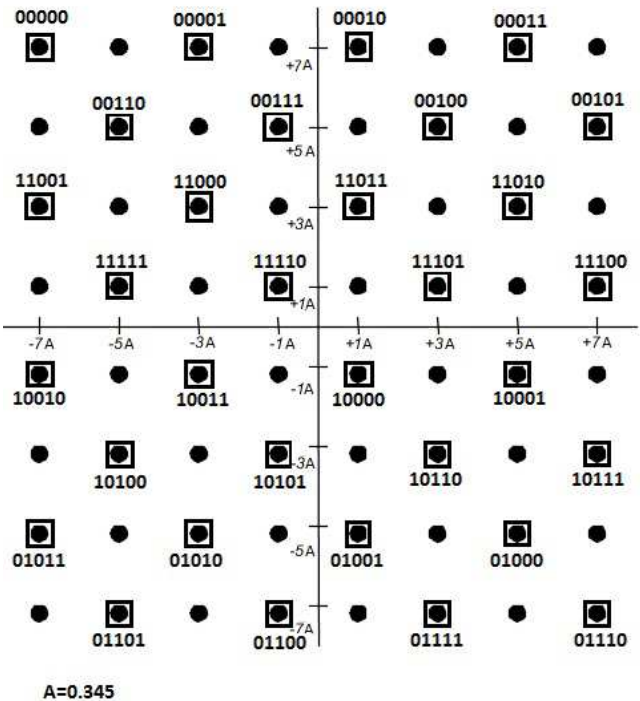}
	\caption{32-QAM mapping for Example \ref{ex_N_n}}
	\label{Map_ex_n}	
\end{figure}
\begin{figure*}[h]
	\includegraphics[scale=0.4]{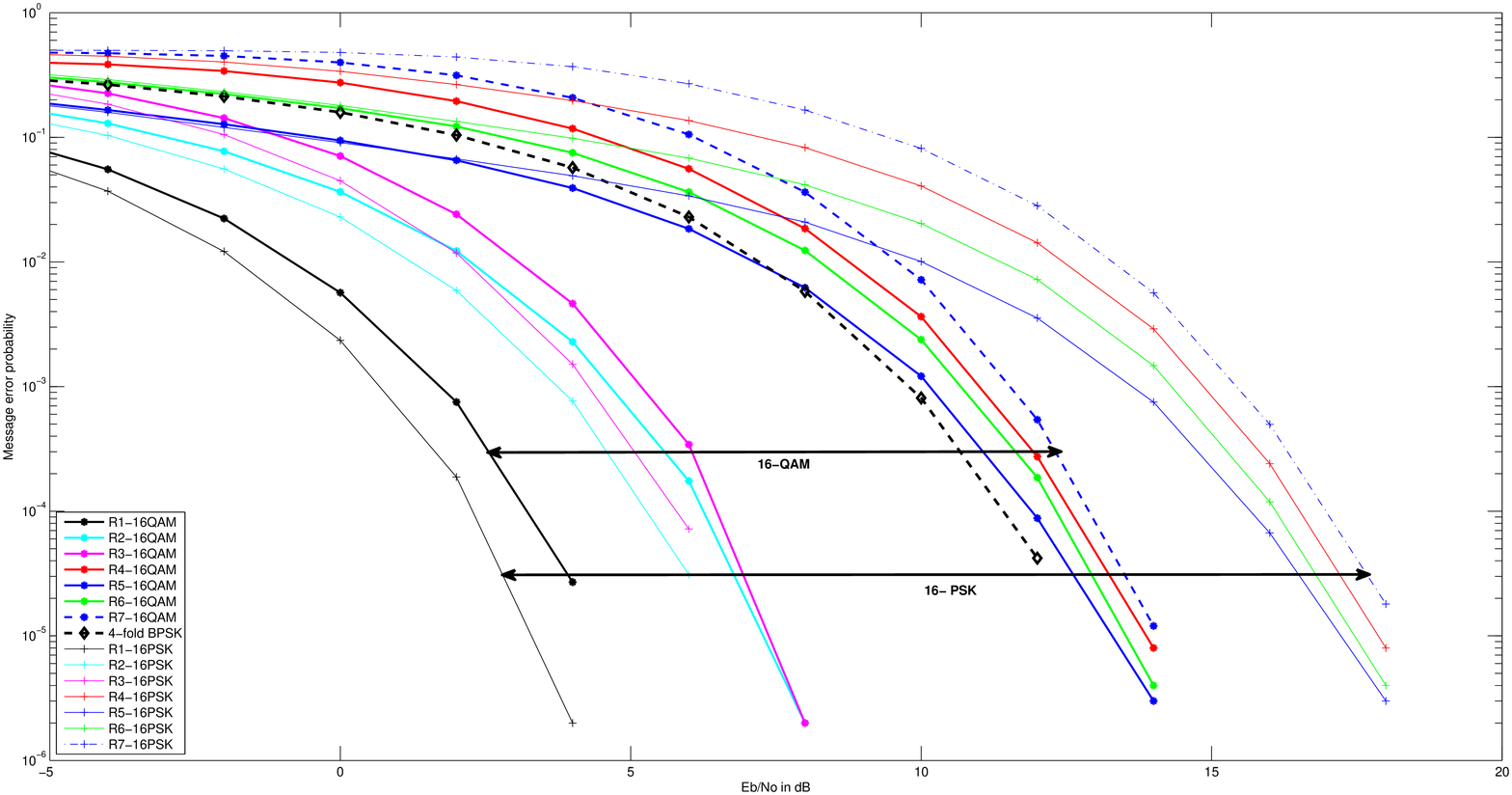}
	\caption{Simulation result comparing the performance of 16-PSK and 16-QAM for Example \ref{ex_PSK_QAM}.}
	\label{sim_ex-PSK_QAM}
\end{figure*}

\begin{figure*}[h]
	\includegraphics[scale=0.4]{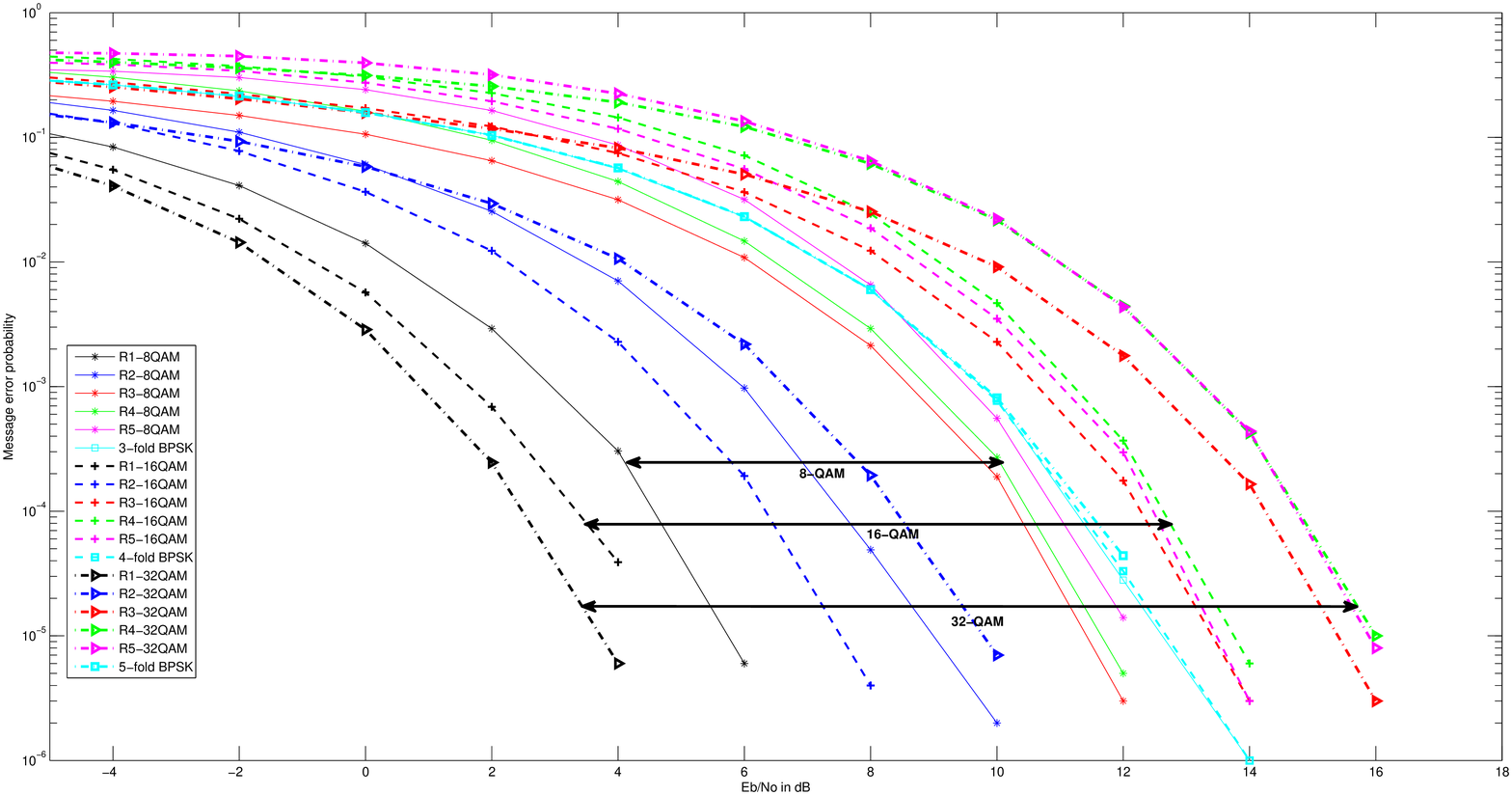}
	\caption{Simulation result comparing the performance of 8-QAM, 16-QAM and 32-QAM for Example \ref{ex_N_n}.}
	\label{sim_ex-N_n}
\end{figure*}
The simulation results for the three cases considered in this example are shown in Fig. \ref{sim_ex-N_n}. The difference in performance shown by the different receivers while using different sized QAM signal sets is because of the difference in the effective minimum distance seen by different receivers while using different signal sets corresponding to index codes of increasing lengths. The effective minimum distances seen by the receivers are summarized in the TABLE \ref{Table_ex_N_n} below. The difference in performance between receivers seeing the same minimum distance is because of the different distance distributions seen by them. 
\begin{table}[h]
	\renewcommand{\arraystretch}{2}
	\begin{center}
		
		\begin{tabular}{|c|c|c|c|c|c|}
			\hline
			Parameter & $R_{1}$ & $R_{2}$ & $R_{3}$ & $R_{4}$ & $R_{5}$ \\
			\hline 
			$d_{min}^2 - 8-QAM$ & 9.6 & 4.8 & 2.4 & 2.4 & 2.4 \\ 
			
			$d_{min}^2 - 16-QAM$ & 12.8 & 6.4 & 1.6 & 1.6 & 1.6   \\ 
			
			$d_{min}^2 - 32-QAM$ & 15.24 & 3.81 & 0.952 & 0.952 & 0.952 \\ 
			
			$d_{min}^2 - binary$ & 4 & 4 & 4 & 4 & 4 \\ 
			
			\hline
			
		\end{tabular}
		
		\caption \small { Table showing  the minimum distances seen by different receivers for 8-QAM, 16-QAM and 32-QAM in  Example \ref{ex_N_n}.}
		\label{Table_ex_N_n}	
	\end{center}
\end{table}
\end{example}
\section{Conclusion}
\label{sec6}
In this paper we considered noisy index coding over AWGN channel. The problem of finding an optimal index code, for a given index coding problem, is, in general, exponentially hard. However, we have shown that finding the minimum number of binary transmissions required is not required for reducing transmission bandwidth over a noisy channel since, we can  use an index code of any given length as a single QAM point thus saving bandwidth. The mapping scheme by the proposed algorithm and QAM transmission are valid for any general index coding problem. It was further shown that if the receivers have huge amount of side information, it is more advantageous to transmit using a longer index code as it will give a higher coding gain as compared to binary transmission scheme.


\end{document}